\titlespacing{\section}{0pt}{3.5ex}{1.5ex}
\titlespacing{\subsection}{0pt}{2.5ex}{0.75ex}
\titlespacing{\subsubsection}{0pt}{2.5ex}{0.75ex}
\titlespacing{\theorem}{0pt}{4.5ex}{0.75ex}
\begin{document}

\newtheorem{theorem}{Theorem}
\newtheorem{lemma}{Lemma}
\newtheorem{definition}{Definition}

\title{On Byzantine Broadcast in Planar Graphs}
\author{Alexandre Maurer$^1$ and S\'{e}bastien Tixeuil$^{1,2}$\\
$^1$ UPMC Sorbonne Universit\'{e}s, Paris, France\\
$^2$ Institut Universitaire de France\\
E-mail: Alexandre.Maurer@lip6.fr, Sebastien.Tixeuil@lip6.fr\\
Phone: +33 1 44 27 87 62, +33 1 44 27 87 75
}

\maketitle

\begin{abstract}
We consider the problem of reliably broadcasting information in a multihop asynchronous network in the presence of Byzantine failures: some nodes may exhibit unpredictable malicious behavior. We focus on completely decentralized solutions. Few Byzantine-robust algorithms exist for loosely connected networks. A recent solution guarantees reliable broadcast on a torus when $D > 4$, $D$ being the minimal distance between two Byzantine nodes.

In this paper, we generalize this result to $4$-connected planar graphs. We show that reliable broadcast can be guaranteed when $D > Z$, $Z$ being the maximal number of edges per polygon. We also show that this bound on $D$ is a lower bound for this class of graphs. Our solution has the same time complexity as a simple broadcast. This is also the first solution where the memory required increases linearly (instead of exponentially) with the size of transmitted information.
\end{abstract}

\vspace{5mm}

\paragraph{Important disclaimer}  These results have NOT yet been published in an international conference or journal. This is just a technical report presenting intermediary and incomplete results. A generalized version of these results may be under submission.

\vspace{7mm}

\section{Introduction}

As modern networks grow larger, they become more likely to fail, as nodes may be subject to crashes, attacks, transient bit flips, etc. To encompass all possible cases, we consider the most general model of failure: the \emph{Byzantine} model \cite{LSP82j}, where the failing nodes can exhibit arbitrary malicious behavior. In other words, tolerating Byzantine nodes implies guaranteeing they are not able to cause problems in the correct part of the network.

In this paper, we study the problem of reliably broadcasting information in a multihop network. In the ideal case, the source node sends the information to its neighbors, that in turn send it to their own neighbors, and so forth (this is denoted in the sequel as a ``simple broadcast''). However, a single Byzantine node can forward a false information and lie to the entire network. Our goal is to design a solution that guarantees reliable broadcast in the presence of Byzantine retransmitters.

\paragraph{Related works.}

Many Byzantine-robust protocols are based on \emph{cryptography} \cite{CL99c,DFS05c}: the nodes use digital signatures to authenticate the sender across multiple hops. However, as the malicious nodes are supposed to ignore some cryptographic secrets, their behavior cannot be considered as \emph{entirely} arbitrary.
Besides, manipulating asymmetric cryptography requires important resources, which may not always be available.
The most important point is that cryptography requires some degree of trusted infrastructure to initially distributes public and private keys: therefore, if this initial infrastructure fails, the whole network fails. Yet, we want to design a totally decentralized solution, where any element can fail independently without compromising the whole system.
For these reasons, we focus on non-cryptographic solutions.

Cryptography-free solutions have first been studied in completely connected networks~\cite{LSP82j,AW98b,MMR03j,MRRS01c,MS03j}: a node can directly communicate with any other node, which implies the presence of a channel between each pair of nodes. Therefore, these approaches are hardly scalable, as the number of channels per node can be physically limited. We thus study solutions in multihop networks, where a node must rely on other nodes to broadcast informations.

A notable class of algorithms tolerates Byzantine failures with either space~\cite{MT07j,NA02c,SOM05c} or time~\cite{MT06cb,DMT11cb,DMT11j,DMT10cd,DMT10ca} locality. Space local algorithms try to contain the fault as close to its source as possible. This is only applicable to the problems where the information from distant nodes is unimportant: vertex coloring, link coloring, dining philosophers, etc. Also, time local algorithms presented so far can hold at most one Byzantine node, and are not able to mask the effect of Byzantine actions. Thus, this approach is not applicable to reliable broadcast.

In \cite{D82j}, it was shown that, for agreement in the presence of up to $k$ Byzantine nodes, it is necessary and sufficient that the network is $(2k+1)$-connected, and that the number of nodes in the system is at least $3k+1$. However, this solution assumes that the topology is known to every node, and that the network is synchronous. Both requirements have been relaxed in \cite{NT09j}: the topology is unknown and the scheduling is asynchronous. Yet, this solution retains $2k+1$ connectivity for reliable broadcast and $k+1$ connectivity for failure detection.

Another existing approach is based on the fraction of Byzantine neighbors per node. Solutions have been proposed for nodes organized on a lattice \cite{K04c,BV05c}. Reliable broadcast was shown possible if every node has strictly less than a $1/4$ fraction of Byzantine neighbors. This result was later generalized to other topologies \cite{PP05j}, assuming that each node knows the global topology.

All aforementioned approaches are hardly applicable to loosely connected networks, where each node has a limited (possibly upper bounded by a constant) number of neighbors. For instance, on a torus topology (see Figure~\ref{figgraph}), no existing solution can tolerate more than one Byzantine node. Efficient solutions have been proposed for such networks \cite{CtrZ,Scalbyz}, but only give probabilistic guarantees, and require the nodes to know their position in the network. This last requirement was relaxed in \cite{Trig}: reliable broadcast is guaranteed on a torus when $D > 4$, $D$ being the minimal number of hops between two Byzantine nodes.

\begin{figure}[H]
\begin{center}
\includegraphics[width=13cm]{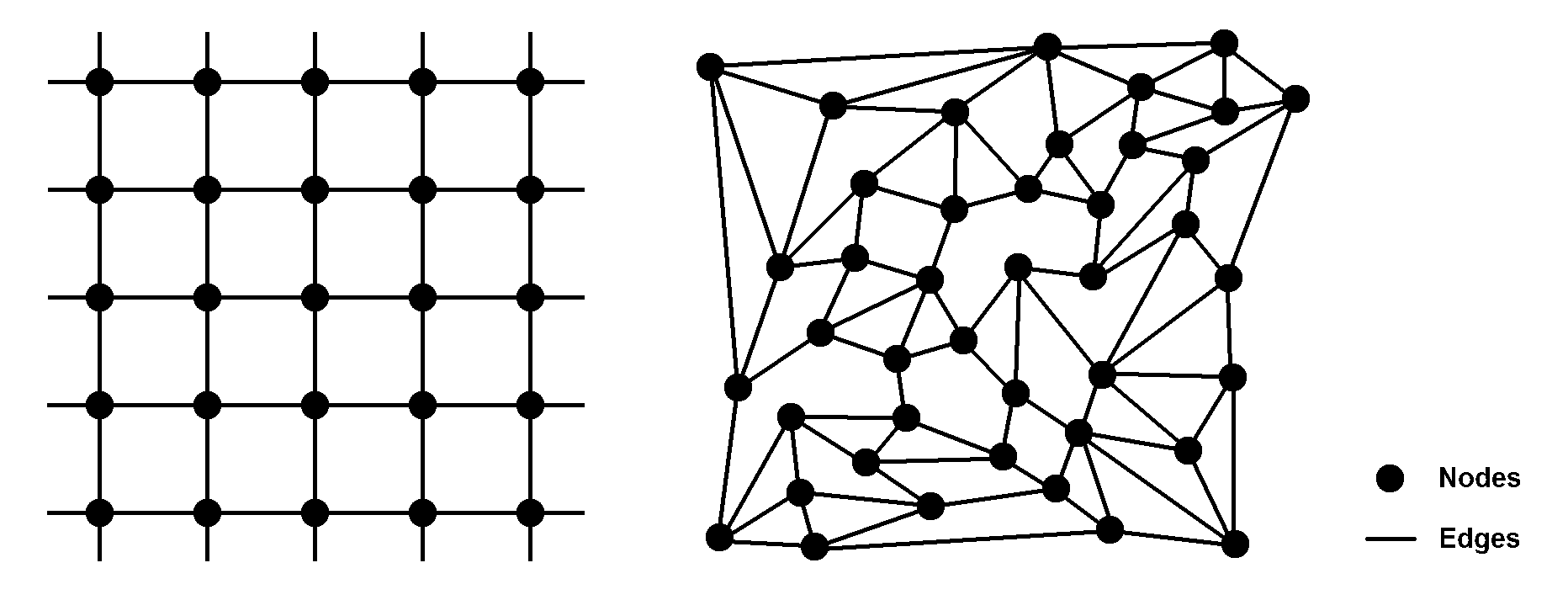}
\caption{Torus (left) and $4$-connected planar graph (right)} 
\label{figgraph}
\end{center}
\end{figure}

\paragraph{Our contribution.}

In this paper, we generalize the result of \cite{Trig} to $4$-connected planar graphs (see Figure~\ref{figgraph}). We show that reliable broadcast can be guaranteed when $D > Z$, $Z$ being the maximal number of edges per polygon. We also show that this bound is tight: if we only have $D \geq Z$, no algorithm can guarantee reliable broadcast for this class of graphs.

Then, if we assume that the delay between two activations of a same process is bounded, we show that reliable broadcast can be achieved in $O(d)$ time units, $d$ being the network diameter. So, tolerating Byzantine failures yields the same time complexity as a simple broadcast.

Finally, we show that, unlike previous solutions \cite{K04c,BV05c,PP05j,CtrZ,Scalbyz,Trig}, the local memory required for broadcasting is $O(M)$ (instead of $O(2^M)$), $M$ being the maximal size of an information message.

\paragraph{Organization of the paper} 
In Section~\ref{sec_setting}, we present the hypotheses and describe the broadcast protocol.
In Section~\ref{sec_relprop}, we prove the condition for reliable broadcast, and show its tightness.
In Section~\ref{sec_timprop}, we establish the time complexity.
Finally, in Section~\ref{sec_memo}, we discuss about the memory requirements.

\section{Setting}

\label{sec_setting}
In this section, we present our hypotheses and describe the broadcast protocol.

\subsection{Hypotheses}

\paragraph{Topology}

Let $\mathcal{G} = (G,E)$ be a graph representing the topology of the network. $G$ denotes the \emph{nodes}, and $E$ denotes the \emph{edges} connecting two nodes. The graph $\mathcal{G}$ is \emph{planar}: there exists a bi-dimensional representation of this graph where edges do not cross. Besides, we assume that the graph is $4$-connected: to disconnect the graph, at least $4$ nodes must be removed (see Definition~\ref{def_connec}). From this hypothesis, each node connects at least $4$ edges.

As the graph is planar, the edges delimit \emph{polygons} (see Figure~\ref{figgraph} and Definition~\ref{def_poly}).
Let $Z \geq 3$ be the maximal number of edges per polygon, and let $Y \geq 4$ be the maximal number of edges per node. 
$Z$ is a parameter of the algorithm.

\paragraph{Network}

Two nodes (or \emph{processes}) connected by an edge (or \emph{channel}) are called \emph{neighbors}. A node can only send messages to its neighbors. Some nodes are $correct$ and follow the protocol described thereafter. The other nodes are \emph{Byzantine}, and have a totally unpredictable behavior. The correct nodes do not know which nodes are Byzantine.

We consider an \emph{asynchronous} network: any message sent is eventually received, but it can be at any time. We assume that, in an infinite execution, any process is activated infinitely often; however, we make no hypothesis on the order of activation of the processes. Finally, we assume \emph{authenticated channels} (or ``oral'' model): each node has a unique identifier, and knows the identifier of its neighbors.
Therefore, when a node receives a message from a neighbor $p$, it knows that $p$ is the actual author of the message.

\subsection{Protocol}

\paragraph{Preliminaries}

An arbitrary correct node, called the \emph{source}, wants to broadcast an information $m_0$ in the network. We say that a correct node \emph{multicasts} a message when it sends it to all its neighbors, and \emph{delivers} $m$ when it permanently considers that $m$ was broadcast by the source. We say that we achieve \emph{reliable broadcast} if all correct nodes eventually deliver $m_0$.

\paragraph{Principle of the protocol}
We use the same underlying principle as in \cite{Trig}: to actually deliver an information message, a node must receive it from a direct neighbor $q$, but also (indirectly) from another node located at at most $Z-2$ hops. The intuitive idea is that, if two Byzantine nodes are distant from more than $Z$ hops, they can never cooperate to make a correct node deliver a false information.

Besides generalizing the aforementioned protocol to planar graphs, our new protocol improves memory efficiency. Indeed, instead of storing all received messages in a set $Rec$, a correct node uses a variable $Rec(q)$ for each neighbor $q$, storing only the last message received from $q$. This modification enables to reduce the memory required by the nodes (see Section~\ref{sec_memo}).

The messages exchanged in the protocol are tuples of the form $(m,S)$, where $m$ is the information broadcast by the source (or pretending to be it), and $S$ is a set containing the identifiers of the nodes already visited by the message.

\paragraph{Description of the protocol}

\begin{itemize}
\item The source multicasts an arbitrary information $m_0$.
\item The correct nodes that are neighbors of the source wait until they receive an information $m$ from the source, then deliver $m$ and multicast $(m,\o)$.
\item The other correct nodes have the following behavior:
\begin{itemize}

\item When $(m,S)$ is received from a neighbor $q$, with $q \notin S$ and $card(S) \leq Z-3$: assign the value $(m,S)$ to $Rec(q)$ and multicast $(m, S \cup \{q\})$.

\item When there exists $m$, $p$, $q$ and $S$ such that $q \neq p$, $q \notin S$, $Rec(q) = (m,\o)$ and $Rec(p) = (m,S)$: deliver $m$, multicast $(m,\o)$ and stop.
\end{itemize}
\end{itemize}

\section{Condition for reliable broadcast}

\label{sec_relprop}

In this section, we prove the main result of the paper: if $D > Z$, we achieve reliable broadcast. We also show that this bound on $D$ is tight: if we only have $D \geq Z$, no algorithm can guarantee reliable broadcast on this class of graphs.

\subsection{Definitions}

\begin{definition}[Path and circular path]
A \emph{path} is a sequence of nodes $(u_1,\dots,u_n)$ such that $u_i$ and $u_{i+1}$ are neighbors. This path is \emph{circular} if $u_1$ and $u_n$ are also neighbors. Unless we mention it, we do not require that these nodes are distinct.
\end{definition}

\begin{definition}[Node-cut and $k$-connected network]
\label{def_connec}
As set $S$ of nodes is a \emph{node-cut} if the graph $G - S$ is disconnected, that is: there exists a pair of nodes $\{p,q\} \notin S$ such that no path connects $p$ and $q$ in $G - S$. The network is \emph{$k$-connected} if no node-cut contains less than $k$ nodes. 
\end{definition}

\begin{definition}[Polygon]
\label{def_poly}
A \emph{polygon} is a circular path that does not surround any node in the bidimensionnal representation of the planar graph.
\end{definition}

\begin{definition}[Neighbor and adjacent polygons]
Two polygons are \emph{neighbors} if they share at least one node, and \emph{adjacent} if they share an edge.
\end{definition}

\begin{definition}[Polygonal path]
A \emph{polygonal path} is a sequence of polygons $(P_1,\dots,P_n)$ such that $P_{i}$ and $P_{i+1}$ are adjacent.
\end{definition}

\begin{definition}[Connected polygons]
A set $S$ of polygons is \emph{connected} if, for each pair of polygons $(P,Q)$ of S,
there exists a polygonal path $(P,P_1,\dots,P_n,Q)$ in $S$.
\end{definition}

\begin{definition}[Correct and Byzantine polygons]
A polygon is \emph{correct} if all its nodes are correct. Otherwise, it is Byzantine.
\end{definition}

\subsection{Main theorem}

Let us show that, if $D > Z$, we achieve reliable broadcast (Theorem~\ref{threl}).

\label{secthrel}

\begin{lemma}
\label{byznei}
Let us suppose that $D > Z$. Then, if two polygons are neighbors, the set of their nodes contains at most one Byzantine node.
\end{lemma}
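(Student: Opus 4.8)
The plan is to argue by contradiction, using the hypothesis $D > Z$ to bound the graph distance between any two nodes that lie on two neighboring polygons. Suppose $P$ and $Q$ are neighbor polygons and that the set of their nodes contains two distinct Byzantine nodes $u$ and $v$. By definition of $D$, any two Byzantine nodes are at graph distance at least $D$, hence $d(u,v) \geq D > Z$. I would derive a contradiction by showing that $u$ and $v$ are in fact at distance at most $Z$, so that no such pair can exist.

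The key observation is that every polygon is a circular path with at most $Z$ edges, hence at most $Z$ nodes, and that on such a cycle any two of its nodes are joined by an arc of the cycle of length at most $\lfloor Z/2 \rfloor$. First I would treat the case where $u$ and $v$ belong to a common polygon, say both to $P$ (the case of both in $Q$ being symmetric): walking along $P$ from $u$ to $v$ gives $d(u,v) \leq \lfloor Z/2 \rfloor \leq Z$, already contradicting $d(u,v) > Z$.

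For the remaining case, where $u$ lies on $P$ only and $v$ on $Q$ only, I would use that, being neighbors, $P$ and $Q$ share at least one node $w$. Concatenating the arc of $P$ from $u$ to $w$ with the arc of $Q$ from $w$ to $v$ produces a walk of length at most $\lfloor Z/2 \rfloor + \lfloor Z/2 \rfloor \leq Z$, whence $d(u,v) \leq Z$. In every case $d(u,v) \leq Z$ contradicts $d(u,v) \geq D > Z$, so the set of nodes of $P$ and $Q$ contains at most one Byzantine node.

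I do not expect a serious obstacle here; the only points requiring care are the cycle-distance estimates. Specifically, I must confirm that a polygon with at most $Z$ edges is a (simple) cycle on at most $Z$ nodes, so that the shorter arc between two of its nodes has length at most $\lfloor Z/2 \rfloor$, and that gluing two such arcs at the shared node $w$ keeps the total length within $Z$ (using $2\lfloor Z/2\rfloor \leq Z$). The existence of the shared node $w$, reachable along $P$ from $u$ and along $Q$ from $v$, is immediate from the definition of neighboring polygons, so the argument reduces to these elementary distance bounds.
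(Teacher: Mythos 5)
Your proposal is correct and follows essentially the same route as the paper: both arguments reduce to the observation that any two nodes of two neighboring polygons are joined, through a shared node, by a closed walk of length at most $2Z$, hence lie at distance at most $Z$, contradicting $D > Z$. The paper simply concatenates the two polygon cycles at the shared node into a single circular path of at most $2Z$ hops rather than splitting into your two cases, but the distance bound and the contradiction are identical.
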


\begin{proof}
The proof is by contradiction.
Let us suppose the opposite:
there exist two neighbor polygons $P$ and $Q$, and the set of their nodes contains two distinct Byzantine nodes $b_1$ and $b_2$.

As $P$ and $Q$ are neighbors, let $u$ be a node shared by $P$ and $Q$.
Let $(u,p_1,\dots,p_n)$ be a circular path on $P$, and let $(u,q_1,\dots,q_m)$ be a circular path on $Q$.
Therefore, $(u,p_1,\dots,p_n,u,q_1,\dots,q_m)$ is a circular path containing all the nodes of $P$ and $Q$.

As this circular path contains at most $2Z$ hops, two nodes of this path are distant of at most $Z$ hops. In particular, $b_1$ and $b_2$ are distant of at most $Z$ hops, which contradicts $D > Z$. Hence, the result.
\end{proof}

\begin{lemma}
\label{komon}
Let $v$ be a node, and let $V$ be the set of polygons containing $v$.
Then, $v$ is the only node common to these polygons.
\end{lemma}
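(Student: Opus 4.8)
The plan is to argue by contradiction: suppose some node $w \neq v$ belongs to every polygon of $V$, and derive a contradiction with $4$-connectivity. First I would observe that it suffices to reach a contradiction using only the \emph{faces} incident to $v$ in the planar drawing, i.e. the minimal cycles bounding $v$. These are polygons in the sense of Definition~\ref{def_poly}, since their interior contains no node, and they all belong to $V$; hence if $w$ lies on every polygon of $V$, it lies in particular on all of them, and including the larger (possibly chorded) polygons only shrinks the common intersection. Fixing the planar drawing, let $e_1,\dots,e_d$ be the edges incident to $v$ in cyclic order, with $d \geq 4$ since the graph is $4$-connected. Consecutive edges $e_i,e_{i+1}$ bound a face $F_i$, so that $F_1,\dots,F_d$ are the faces around $v$, and by assumption $w$ lies on the boundary of each of them.

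The heart of the argument is a Jordan-curve construction. I would pick two faces $F$ and $F'$ that are ``opposite'', meaning that at least one edge of $v$ lies strictly between them on each of the two sides; this is possible precisely because $d \geq 4$ (for instance $F_1$ and $F_3$, leaving $e_2,e_3$ on one side and $e_4,\dots,e_1$ on the other). Since $v$ and $w$ both lie on the boundary of $F$, whose interior is an empty simply connected region, I can join them by a simple arc drawn inside $F$, and similarly inside $F'$. These two arcs meet only at $v$ and $w$ (interiors of distinct faces are disjoint) and neither crosses any edge of the graph (face interiors contain no edge), so their union is a simple closed curve $\gamma$ whose only intersection with the graph is the node pair $\{v,w\}$.

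By the Jordan curve theorem, $\gamma$ separates the plane into two regions. Near $v$ the two arcs leave through the interiors of $F$ and $F'$, splitting the edges incident to $v$ into two nonempty groups, one entering each region; hence there is a neighbor of $v$ strictly inside $\gamma$ and a neighbor strictly outside. Because $\gamma$ meets the graph only at $v$ and $w$, any path in $\mathcal{G}$ from an inside node to an outside node must pass through $v$ or $w$. Therefore $G - \{v,w\}$ is disconnected, so $\{v,w\}$ is a node-cut of size $2$, contradicting $4$-connectivity (Definition~\ref{def_connec}). This contradiction shows that no such $w$ exists, i.e. $v$ is the only node common to all polygons of $V$.

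The step I expect to be the main obstacle is making the separation rigorous rather than merely pictorial: one must verify that the two arcs can always be chosen to meet only at $v$ and $w$, that the edges incident to $v$ genuinely split into two nonempty sides, and that each side still retains a neighbor distinct from $w$ even when $w$ happens to be a neighbor of $v$. This last point is exactly where the bound $d \geq 4$ is used, as it guarantees a spare neighbor on each side after deleting $w$. Once the separating curve is in place, the connectivity contradiction is immediate.
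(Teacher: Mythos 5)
Your proof is correct and takes essentially the same route as the paper's: both derive a contradiction with $4$-connectivity by exhibiting $\{v,w\}$ as a $2$-node-cut, obtained by separating the plane with a closed curve drawn through the interiors of two faces that each contain both $v$ and $w$. The paper phrases this by fixing a polygon $P$ at $v$ and its two adjacent polygons $Q_1,Q_2$ across the edges of $P$ at $v$ and asserting the cut directly, whereas you make the Jordan-curve separation and the non-emptiness of both sides explicit; the underlying argument is the same.
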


\begin{proof}
Let us suppose the opposite: the exists a node $w \neq v$ common to these polygons.
Let $P$ be a polygon containing $v$.
Let $q_1$ and $q_2$ be the two neighbors of $v$ contained by $P$.
Let $Q_1$ (resp. $Q_2$) be the polygon adjacent to $P$ containing $v$ and $q_1$ (resp. $q_2$).
Let $S$ be the set of nodes contained by $P$.
As a polygon contains at least $3$ nodes, $S - \{v,w\}$ contains at least one node.
Then, as $w$ is also common to $P$, $Q_1$ and $Q_2$, $\{v,w\}$ is a node-cut isolating $S - \{v,w\}$ from the rest of the network.
This is impossible, as the network is $4$-connected.
Hence, the result.
\end{proof}

\begin{lemma}
\label{lpcorr}
If $D > Z$, each correct node belongs to at least one correct polygon.
\end{lemma}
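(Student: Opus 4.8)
The plan is to argue by contradiction, combining the two preceding lemmas. Suppose some correct node $v$ lies on no correct polygon. Let $V$ be the set of polygons containing $v$; this set is nonempty, since planarity together with $4$-connectivity guarantees that every node (having degree at least $4$) borders at least one polygon. Under our assumption, each $P \in V$ contains at least one Byzantine node, and since $v$ itself is correct, that Byzantine node is distinct from $v$.

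The first key observation I would make is that any two polygons $P, Q \in V$ are \emph{neighbors} in the sense of the earlier definition, because they share the node $v$. Hence Lemma~\ref{byznei} applies to every such pair: the union of their node sets contains at most one Byzantine node. This is the step that does the real work, so I would set it up carefully, checking that the hypothesis $D > Z$ is exactly what Lemma~\ref{byznei} needs.

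Next I would use this pairwise bound to collapse all the Byzantine nodes to a single one. Pick, for each $P \in V$, a Byzantine node $b_P \neq v$ that it contains. For any $P, Q \in V$, the nodes of $P$ and $Q$ together hold at most one Byzantine node; since $b_P$ and $b_Q$ are both Byzantine and lie in this union, we must have $b_P = b_Q$. Therefore all the $b_P$ coincide with one fixed node $b$, which consequently belongs to \emph{every} polygon of $V$. Finally I would invoke Lemma~\ref{komon}: the only node common to all polygons of $V$ is $v$ itself. Since $b$ is common to all of them, $b = v$, contradicting the fact that $b$ is Byzantine while $v$ is correct. This contradiction yields the claim.

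I expect the main obstacle to be the collapse-to-a-single-node step: one must be sure that the pairwise application of Lemma~\ref{byznei} is legitimate (each pair of polygons in $V$ genuinely shares $v$, hence is a neighbor pair) and that the bound forces equality of the Byzantine nodes rather than merely bounding their number. The remaining points — nonemptiness of $V$ and the observation that $b \neq v$ — are routine and follow directly from the standing hypotheses.
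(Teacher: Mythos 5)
Your proof is correct and follows essentially the same route as the paper's: assume $v$ lies on no correct polygon, use Lemma~\ref{byznei} on pairs of polygons through $v$ (which are neighbors since they share $v$) to force a single common Byzantine node $b$, then invoke Lemma~\ref{komon} to conclude $b=v$, a contradiction. Your direct pairwise collapse is if anything slightly cleaner than the paper's appeal to induction, but the substance is identical.
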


\begin{proof}
Let us suppose the opposite: there exists a correct node $v$ that does not belong to any correct polygon.
Let $V$ be the set of polygons containing $v$.
Let $P_1$ and $P_2$ be two polygons of $V$.
As $P_1$ and $P_2$ are Byzantine, according to Lemma~\ref{byznei}, they share the same Byzantine node $b$
Therefore, by induction, all the polygons of $V$ share the same Byzantine node $b$.
But according to Lemma~\ref{komon}, $v$ is the only node shared by the polygons of $V$.
Therefore, $b = v$, and $v$ is Byzantine: contradiction. Hence, the result.
\end{proof}

\begin{lemma}
\label{lcircupath}
Let $v$ be a node, and let $V$ be the set of polygons containing $v$. Let $X$ be the set of nodes contained by the polygons of $V$. 
Then, there exists a circular path $(q_1,\dots,q_m)$ such that nodes $\{q_1,\dots,q_m\}$ are distinct and that contains all nodes of $X - \{v\}$, and only contains nodes of $X$.
\end{lemma}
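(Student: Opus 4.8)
The plan is to read off the required cycle from the planar embedding. The edges incident to $v$ occur in a well-defined cyclic order, and each pair of consecutive edges $(v,w_i),(v,w_{i+1})$ bounds exactly one polygon of $V$; this orders the polygons of $V$ cyclically as $P_1,\dots,P_k$, with $P_i$ and $P_{i+1}$ sharing the edge $(v,w_{i+1})$. First I would note that every polygon is in fact a \emph{simple} cycle: a repeated node on a polygon boundary would be an articulation point, contradicting $4$-connectivity. Writing $B_i$ for the arc obtained from $P_i$ by deleting $v$ — a simple path from $w_i$ to $w_{i+1}$ through the other nodes of $P_i$ — the concatenation $W = B_1 B_2 \cdots B_k$ is a closed walk that visits every node of $X-\{v\}$ and no node outside $X$, since the nodes of $P_i$ are exactly $\{v\}$ together with the nodes of $B_i$.

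The substance of the argument is to show that $W$ is a \emph{simple} cycle, i.e.\ that the arcs $B_i$ overlap only where forced. I would establish the key fact that any two distinct polygons of $V$ share either only $v$, or only $v$ together with a single neighbor $w$ of $v$ joined to it by a common edge. Granting this, each neighbor $w_i$ of $v$ lies on exactly the two consecutive polygons $P_{i-1},P_i$ bordering the edge $(v,w_i)$, while every other node of $X-\{v\}$ lies on a single polygon and hence on a single arc $B_i$; so $W$ meets each node of $X-\{v\}$ exactly once at the intended gluing points and is a simple cycle. To prove the key fact I would argue by contradiction in the manner of Lemma~\ref{komon}: if two polygons $P,Q\in V$ shared two nodes $v$ and $u$ without being glued along a common edge, then, each being a simple cycle bounding a node-free region on its interior side, planarity forces the interior nodes of one of the two arcs into which $\{v,u\}$ splits $P$ to be reachable from the rest of the graph only through $v$ or $u$; thus $\{v,u\}$ would be a node-cut of size $2$, contradicting $4$-connectivity.

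Finally I would assemble the circular path, distinguishing two cases by the position of $v$. When $v$ is an interior node the cyclic order closes up ($w_{k+1}=w_1$) and $W$ is already the desired circular path: it avoids $v$, runs through all of $X-\{v\}$ with distinct nodes, and uses only nodes of $X$. When $v$ lies on the outer face the arcs instead form a simple path from $w_1$ to $w_{k+1}$, which I close into a circular path by appending $v$ via the edges $(v,w_1)$ and $(w_{k+1},v)$; this keeps all nodes distinct and adds only $v\in X$. In either case the conclusion of the lemma holds.

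The step I expect to be the main obstacle is the node-cut argument of the second paragraph. Lemma~\ref{komon} only controls nodes common to \emph{all} polygons of $V$, whereas here I need \emph{pairwise} control of overlaps, and making the separation rigorous requires genuine planar topology (which side of a face is empty, Jordan-curve reasoning) rather than a purely combinatorial count. Getting the interior/boundary dichotomy and the degenerate small cases (triangular polygons, where some $B_i$ is a single edge) stated cleanly is the remaining delicate point.
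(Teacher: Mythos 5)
Your proposal is correct and follows essentially the same route as the paper: order the edges at $v$ cyclically, concatenate the arcs of the incident polygons with $v$ deleted, use a $\{v,u\}$ node-cut argument against $4$-connectivity to show the resulting walk is simple, and split into the interior-node case versus the outer-face case (closing the path through $v$ in the latter). Your pairwise-overlap lemma for polygons of $V$ is a somewhat more careful formulation of the distinctness step than the paper's, but it is the same underlying argument.
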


\begin{proof}
Let $(e_1,\dots,e_n)$ be the edges connected to $v$, ordered clockwise, and let $e_{n+1} = e_1$. Let $u_i$ be the node connected to $v$ by $e_i$.
If, $\forall i \in \{1,\dots,n\}$, there exists a polygon containing the edges $e_i$ and $e_{i+1}$, go to paragraph $1$. Else, go to paragraph $2$.

\begin{enumerate}

\item Let $P_i$ be the polygon containing the edges $e_i$ and $e_{i+1}$. Let $(v,u_i,p^i_1,p^i_2,\dots,u_{i+1})$ be a circular path on $P_i$, ordered clockwise.
We define a path $(u_1,p^1_1,p^1_2,\dots,u_2,p^2_1,p^2_2,\dots,u_{n+1}) = (q_1,\dots,q_{m+1})$, containing all the nodes of $X - \{v\}$.
Let us show that the nodes $\{q_1,\dots,q_m\}$ are distinct.
Let us suppose the opposite: there exists $k$ and $k' > k$ such that $u_k = u_{k'}$. Then, $\{u_k,v\}$ is a node-cut disconnecting $\{u_{k+1},\dots,u_{k'-1}\}$ from the rest of the network, which is impossible as the network is $4$-connected. Thus, the nodes are distinct. Hence, the result.

\item 
Let $k$ be the first integer such that $e_{k}$ and $e_{k+1}$ do not belong to any polygon. Let us notice that there is no other integer $k' > k$ satisfying this property -- otherwise, $\{v\}$ would be a node-cut isolating $u_k$ from $u_{k'}$.
Then, let $(e'_1,\dots,e'_n)$ be the edges connected to $v$, clockwise, such that $e'_1 = e_{k+1}$.

Let $P_i$ be the polygon containing the edges $e'_i$ and $e'_{i+1}$. Let $(v,u_i,p^i_1,p^i_2,\dots,u_{i+1})$ be a circular path on $P_i$, ordered clockwise.
We define a path $(u_1,p^1_1,p^1_2,\dots,u_2,p^2_1,p^2_2,\dots,u_{n}) = (q_1,\dots,q_{m-1})$, containing all nodes of $X - \{v\}$.
For the same reasons as in paragraph $1$, the nodes $\{q_1,\dots,q_{m-1}\}$ are distinct.
Hence, the result, if we take $q_m = v$.

\end{enumerate}

\end{proof}

\begin{lemma}
\label{lcircle}
Let $v$ be a node, and let $V$ be the set of polygons containing $v$. Let $S$ be the set of polygons that are not is $V$, but are neighbors with a polygon of $V$. Then, $S$ is connected.
\end{lemma}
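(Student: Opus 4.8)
The plan is to reduce the statement to a walk around the ``boundary'' of the region occupied by the polygons of $V$. By Lemma~\ref{lcircupath}, the nodes of $X-\{v\}$ lie on a simple circular path $C=(q_1,\dots,q_m)$ with distinct nodes, which in the planar embedding bounds the region $R$ tiled by the polygons of $V$; intuitively $R$ is a topological disk with boundary $C$ and with $v$ in its interior (in the second case of Lemma~\ref{lcircupath}, $v$ is one of the $q_i$ and lies on the boundary). First I would record the easy observation that every polygon of $S$ contains at least one node of $C$: indeed a polygon $P\in S$ shares some node $w$ with a polygon of $V$, and since $P\notin V$ we have $w\neq v$, so $w\in X-\{v\}$, i.e.\ $w$ is a node of $C$. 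Thus it suffices to show that the polygons touching $C$ from outside $R$ form a single chain of pairwise-adjacent polygons.

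Next I would analyse the local structure at a single node $q=q_j$ of $C$. Using planarity together with the connectivity of the graph (Definition~\ref{def_connec}), the bounded faces incident to $q$ can be listed in cyclic order so that consecutive faces share an edge through $q$, hence are adjacent. Because $q$ lies on the boundary of the disk $R$, the polygons of $V$ incident to $q$ (those that also contain $v$) occupy a contiguous arc of this cyclic list; the complementary arc consists exactly of the bounded faces incident to $q$ that do not contain $v$, which are precisely the polygons of $S$ incident to $q$. Consecutive members of this complementary arc are adjacent, so the polygons of $S$ at $q$ already form a polygonal path inside $S$.

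Then I would chain these local arcs together. For a boundary edge $(q_j,q_{j+1})$ of $C$, the single face lying outside $R$ and incident to this edge contains both $q_j$ and $q_{j+1}$, and it is simultaneously the extreme element of the $S$-arc at $q_j$ and of the $S$-arc at $q_{j+1}$. Hence the arc at $q_j$ and the arc at $q_{j+1}$ share this polygon, and concatenating the arcs as $j$ runs around $C$ produces one polygonal path that stays within $S$ and visits every polygon of $S$ (by the observation above). Any two polygons of $S$ therefore lie on a common polygonal path in $S$, which is exactly the definition of $S$ being connected.

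The hard part will be the topological bookkeeping rather than the combinatorics. Two points need care. First, the ``face outside $R$'' incident to a boundary edge may be the unbounded face, which is not a polygon; then the chain acquires a gap, and I must argue that these gaps never split $S$ into two pieces. I expect $4$-connectivity to be exactly what forbids this: if the unbounded face met $C$ along two disjoint arcs, the nodes separating those arcs would form a cut of size less than $4$ isolating part of the exterior, contradicting Definition~\ref{def_connec} (this is the same style of argument used in Lemmas~\ref{komon} and~\ref{lcircupath}). Second, I must treat the boundary case of Lemma~\ref{lcircupath} where $v$ itself appears on $C$: there is simply no $S$-arc at $v$, so the concatenation yields an open path rather than a closed walk, but it still covers all of $S$ and the conclusion is unaffected. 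Making the claim ``$R$ is a disk and the $V$-faces at each boundary node are contiguous'' fully rigorous from the combinatorial data of Lemma~\ref{lcircupath} is the principal obstacle.
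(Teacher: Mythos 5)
Your plan follows the same route as the paper's proof: decompose $S$ into the local sets of polygons at each node $q_i$ of the circular path from Lemma~\ref{lcircupath}, link consecutive local sets through the polygon sitting on the boundary edge $(q_i,q_{i+1})$, and invoke $4$-connectivity to dispose of the case where some local set is split in two. The ``principal obstacle'' you flag is handled in the paper exactly as you anticipate: if the local set at some $q_k$ falls into two pieces, re-index the circular path to start at $q_k$ and observe that a persistent split would make $\{q'_1, v, q'_{i+1}\}$ a node-cut of size $3$, contradicting $4$-connectivity.
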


\begin{proof}
Let $(q_1,\dots,q_m)$ be the circular path of Lemma~\ref{lcircupath}.
Then, $S = S_1 \cup \dots \cup S_m$, where $S_i$ is the set of polygons containing $q_i$.
If each set $S_i$ is connected, as $S_i$ and $S_{i+1}$ share a polygon containing $q_i$ and $q_{i+1}$, $S$ is connected. Now, let us suppose that there exists a $k$ such that $S_k$ is not connected.

$S_k$ contains only two disconnected parts, otherwise $\{v\}$ would be a node-cut.
Let $(q'_1,\dots,q'_m)$ be a circular path containing nodes $\{q_1,\dots,q_m\}$, ordered clockwise, such that $q'_1 = q_k$.
Let $S'_1$ (resp. $S'_{m+1}$) be the part of $S_k$ containing the node $q_2$ (resp. $q_m$).
$\forall i \in \{2,\dots,m\}$, let $S'i$ be the set of polygons containing $q'i$. Then, $S = S_1 \cup \dots \cup S_m= S'1 \cup \dots \cup S'_{m+1}$.
Let us prove the following property $\mathcal{P}_i$ by induction, $\forall i \in \{1,\dots,m+1\}$: $S'_1 \cup \dots \cup S'_i$ is connected.

\begin{itemize}
\item $\mathcal{P}_1$ is true, as $S'_1$ is connected.
\item Let us suppose that $\mathcal{P}_i$ is true, for $i \in \{1,\dots,m\}$. Let us suppose that $S'_1 \cup \dots \cup S'_{i+1}$ is not connected.
It implies that $S'_{i+1}$ is not connected.
$S'_{i+1}$ contains only two disconnected parts, otherwise $\{q'_{i+1}\}$ would be a node-cut.
Let $S'^A_{i+1}$ be the part containing the node $q'_i$, and let $S'^B_{i+1}$ be the other part.
Then, $\{q'_1,v,q'_{i+1}\}$ is a node-cut isolating $S'_1 \cup \dots S'_i \cup S'^A_{i+1}$ from $S'^B_{i+1}$, which is impossible as the network is $4$-connected.
Thus, $\mathcal{P}_{i+1}$ is true.
\end{itemize}

Therefore, $\mathcal{P}_{m+1}$ is true, and $S$ is connected.
\end{proof}

\begin{lemma}
\label{byzcircle}
Let us suppose that $D > Z$.
Let $(P,P_1,\dots,P_n,Q)$ be a polygonal path such that $P$ and $Q$ are correct, and $\{P_1,\dots,P_n\}$ are Byzantine. Then, there exists a polygonal path $(P,Q_1,\dots,Q_m,Q)$ such that $\{Q_1,\dots,Q_m\}$ are correct.
\end{lemma}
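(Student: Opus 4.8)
The plan is to show that every Byzantine polygon on the given path is Byzantine \emph{because of one and the same node}, and then to route around that node using the connectivity guaranteed by Lemma~\ref{lcircle}.

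First I would argue that $P_1,\dots,P_n$ all contain a common Byzantine node $b$. Indeed, $P_1$ and $P_2$ are adjacent, hence neighbors, so by Lemma~\ref{byznei} the union of their nodes contains at most one Byzantine node; since each of $P_1$ and $P_2$ is Byzantine, they must share a single common Byzantine node $b$, which is in fact the unique Byzantine node of each. Proceeding by induction along the chain $P_2,P_3,\dots,P_n$ (each consecutive pair being adjacent and Byzantine, and each having a unique Byzantine node), the same node $b$ is forced to be the unique Byzantine node of every $P_i$.

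Next I would invoke Lemma~\ref{lcircle} with $v = b$. Let $V$ be the set of all polygons containing $b$, so that $\{P_1,\dots,P_n\}\subseteq V$, and let $S$ be the set of polygons that are not in $V$ but are neighbors of some polygon of $V$; Lemma~\ref{lcircle} asserts that $S$ is connected. I claim both endpoints lie in $S$: since $P$ and $Q$ are correct they cannot contain the Byzantine node $b$, so $P,Q \notin V$; and since $P$ is adjacent to $P_1 \in V$ and $Q$ is adjacent to $P_n \in V$, both are neighbors of a polygon of $V$. Hence $P,Q \in S$, and by connectedness of $S$ there exists a polygonal path $(P,Q_1,\dots,Q_m,Q)$ lying entirely in $S$.

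It remains to check that the detour polygons are correct, which is the step to watch. Take any $R \in S$: by definition $R$ is a neighbor of some $W \in V$, and $W$ contains $b$, so by Lemma~\ref{byznei} the union of the nodes of $R$ and $W$ contains at most one Byzantine node, which must be $b$. Since $R \notin V$ means $R$ does not contain $b$, the polygon $R$ contains no Byzantine node at all, i.e.\ $R$ is correct. Applying this to $Q_1,\dots,Q_m \in S$ yields the desired correct polygonal path. The main conceptual obstacle is the first step: recognizing that adjacency of Byzantine polygons collapses them onto a single shared Byzantine node, so that the whole Byzantine stretch can be bypassed by the single connected ``ring'' of polygons around $b$ furnished by Lemma~\ref{lcircle}.
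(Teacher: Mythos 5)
Your proposal is correct and follows essentially the same route as the paper: collapse the Byzantine stretch onto a single common Byzantine node $b$ via Lemma~\ref{byznei} and induction, invoke Lemma~\ref{lcircle} on the ring $S$ around $b$, and use Lemma~\ref{byznei} again to conclude that every polygon of $S$ is correct. You are, if anything, slightly more explicit than the paper in justifying that $P,Q \in S$ (noting they cannot lie in $V$ because they are correct), but the argument is the same.
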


\begin{proof}
According to Lemma~\ref{byznei}, $P_i$ and $P_{i+1}$ share the same Byzantine node $b$.
Therefore, by induction, the polygons $\{P_1,\dots,P_n\}$ share the same Byzantine node $b$.

Let $V$ be the set of polygons containing $b$, and let $S$ be the set of polygons that are not in $V$, but are neighbors to a polygon of $V$. 
As $V$ contains $P_1$ and $P_n$, by definition, $S$ contains $P$ and $Q$.
According to Lemma~\ref{lcircle}, $S$ is connected: there exists a polygonal path $(P,Q_1,\dots,Q_m,Q)$ in $S$.
To complete the proof, let us show that the polygons of $S$ are correct.

Let us suppose the opposite: there exists a polygon $P'$ of $S$ that is Byzantine.
Let $b'$ be the Byzantine node contained by $P'$.
Then, as $P'$ has a neighbor polygon in $V$, according to Lemma~\ref{byznei}, $b' = b$. It implies that $P'$ belongs to $V$: contradiction. Thus, the polygons of $S$ are correct. Hence, the result.

\end{proof}

\begin{lemma}
\label{lppath}
If $D > Z$, the set of correct polygons is connected.
\end{lemma}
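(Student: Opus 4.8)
The plan is to reduce the statement to one application of Lemma~\ref{byzcircle} per ``block'' of Byzantine polygons encountered along a path. Connectedness of a set of polygons is a statement about \emph{every} pair, so fix two arbitrary correct polygons $P$ and $Q$; it suffices to exhibit a polygonal path from $P$ to $Q$ that uses only correct polygons.

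First I would invoke the fact that the set of \emph{all} polygons is connected: because the network is connected (indeed $4$-connected), the face-adjacency relation of its planar embedding --- two polygons being adjacent when they share an edge --- is itself connected. Hence there is at least one polygonal path $(P,R_1,\dots,R_k,Q)$ joining $P$ and $Q$, a priori passing through Byzantine polygons.

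Next I would peel off the Byzantine polygons. Walking along $(P,R_1,\dots,R_k,Q)$, label each polygon correct or Byzantine; the two endpoints are correct by assumption. Group the interior into maximal runs of consecutive Byzantine polygons. Each such run is bracketed by two correct polygons $A$ and $B$ (taking $P$ or $Q$ at the extremities when needed), so the corresponding subpath has exactly the shape $(A,B_1,\dots,B_j,B)$ required by Lemma~\ref{byzcircle}: the flanking polygons $A$ and $B$ are correct while $B_1,\dots,B_j$ are Byzantine. Applying Lemma~\ref{byzcircle} to each run yields a detour $(A,Q_1,\dots,Q_\ell,B)$ consisting solely of correct polygons.

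Finally I would stitch the pieces together: the correct-to-correct steps of the original path are kept unchanged, and each Byzantine run is replaced by its correct detour. Since the junction points are precisely the shared correct polygons $A$ and $B$, the concatenation is again a polygonal path, now through correct polygons only, connecting $P$ and $Q$ inside the set of correct polygons. As $P$ and $Q$ were arbitrary, that set is connected. The step I expect to be the main obstacle is the very first one, namely justifying that two arbitrary polygons are joined by \emph{some} polygonal path: the preceding lemmas are all local to a single node $v$ and do not by themselves supply a global path, so the connectedness of the global face-adjacency graph must be argued separately; everything after that is bookkeeping driven entirely by Lemma~\ref{byzcircle}.
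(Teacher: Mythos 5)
Your proof is correct and follows essentially the same route as the paper: take an arbitrary polygonal path from $P$ to $Q$, isolate the maximal Byzantine runs, and replace each by the correct detour provided by Lemma~\ref{byzcircle}. The only difference is that you explicitly flag and justify the existence of an initial polygonal path between any two polygons (connectedness of the face-adjacency graph), a step the paper silently assumes, so if anything your version is slightly more careful.
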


\begin{proof}
Let $P$ and $Q$ be two correct polygons, and let $(P,P_1,\dots,P_n,Q)$ be a polygonal path.
If $\{P_1,\dots,P_n\}$ are correct, the result is trivial. Otherwise, let us consider the following process.

Let $N$ be the smallest integer such that $P_N$ is Byzantine, and let $M$ be the smallest integer greater than $N$ such that $P_{M+1}$ is correct. Then, according to Lemma~\ref{byzcircle}, there exists a polygonal path $(P_{N-1},Q_1,\dots,Q_m,P_{M+1})$ such that the polygons $\{Q_1,\dots,Q_m\}$ are correct. Therefore, we can replace the sequence $(P_N,\dots,P_M)$ by $(Q_1,\dots,Q_m)$. We repeat this process until all the polygons of the path are correct.
\end{proof}

\begin{lemma}
\label{lsafe}
Let us suppose that $D \geq Z$. Then, if a correct node delivers an information, it is necessarily $m_0$.
\end{lemma}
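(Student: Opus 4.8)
The plan is to prove this safety property by contradiction through a purely metric argument: assuming some correct node delivers a message $m \neq m_0$, I will exhibit two \emph{distinct} Byzantine nodes at distance strictly less than $Z$, contradicting $D \geq Z$. It is worth noting up front that none of the polygon machinery of Lemmas~\ref{byznei}--\ref{lppath} is needed here; the safety guarantee follows entirely from the delivery rule together with the bookkeeping of the visited-node sets $S$. Since the happens-before relation on the events of an asynchronous execution is well-founded (each event has a finite causal past), I would select a correct node $v$ whose delivery of some $m \neq m_0$ is \emph{causally minimal}, meaning no correct node's delivery of a non-$m_0$ message happens-before it. At that delivery, the second rule supplies neighbors $q \neq p$ and a set $S$ with $q \notin S$, $Rec(q) = (m,\o)$, and $Rec(p) = (m,S)$.

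First I would show that $q$ is Byzantine. A correct node only emits $(m,\o)$ either as a neighbor of the source (which forces $m = m_0$, excluded) or immediately after delivering $m$ via the second rule. In the latter case the chain ``$q$ delivers $\to$ $q$ multicasts $\to$ $v$ receives and sets $Rec(q)$ $\to$ $v$ delivers'' would make $q$'s delivery happen-before $v$'s, contradicting minimality. Hence $q$ is Byzantine; write $b_1 := q$.

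Next I would trace the message $(m,S)$ received from $p$ backwards along its chain of senders. Every correct sender on this chain must have relayed through the first rule (by minimality and $m \neq m_0$ it neither delivered $m$ nor is a source-neighbor), so the trace-back terminates at a Byzantine node $b_2$, since a correct origin would again either force $m=m_0$ or contradict minimality. Writing the correct relay segment as $b_2 \to x_1 \to \dots \to x_\rho = p \to v$ (with $\rho = 0$ and $b_2 = p$ in the degenerate case $p$ Byzantine, which is handled identically), the stored value obeys $S = T \cup \{b_2,x_1,\dots,x_{\rho-1}\}$ where $T$ is whatever $b_2$ injected; these elements are pairwise distinct and disjoint from $T$ by the relay conditions, so $|S| = |T| + \rho$. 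The decisive observation is the off-by-one in the protocol: $Rec(p)$ can only have been assigned through the first rule, whose guard is $card(S) \leq Z-3$ (\emph{not} $Z-2$), so $|S| \leq Z-3$ and therefore $\rho \leq Z-3$.

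Putting the pieces together, the relay path gives $d(b_2,v) \leq \rho + 1 \leq Z-2$, while $d(b_1,v) = 1$, so $d(b_1,b_2) \leq Z-1$. Moreover $b_1 = q \neq b_2$: when $b_2$ is a strict intermediate it lies in $S$ whereas $q \notin S$, and when $b_2 = p$ we use $q \neq p$ directly. Thus $b_1$ and $b_2$ are two distinct Byzantine nodes at distance $<Z$, contradicting $D \geq Z$, which closes the argument. I expect the genuine content to lie in two places: making the causal-minimality selection rigorous in the asynchronous model (so that $q$ and the trace-back origin are forced to be Byzantine), and the exact set-size accounting, since it is precisely the $Z-3$ bound on stored sets that turns the distance estimate from the useless $\leq Z$ into the strict $\leq Z-1$ needed for the contradiction.
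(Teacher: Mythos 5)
Your proof is correct and follows essentially the same route as the paper's: identify the $(m,\o)$-neighbour $q$ as Byzantine via minimality of the delivery, then trace the $(m,S)$ message backward along its relay chain, using the guard $card(S) \leq Z-3$ to bound the chain length and hence the distance of every node on it to $q$. The only difference is organizational --- the paper uses the distance bound to force each node on the chain to be correct and derives the final contradiction from minimality at the chain's origin, whereas you use minimality to force the origin to be Byzantine and derive the final contradiction from the distance bound; these are the same argument read in opposite directions.
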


\begin{proof}
The proof is by contradiction. Let us suppose the opposite: $D \geq Z$, yet at least one correct node delivers $m' \neq m_0$.
Let $u$ be the first correct node to deliver $m'$. It implies that there exists $p$, $q$ and $S$ such that $q \neq p$, $q \notin S$, $Rec(q) = (m',\o)$ and $Rec(p) = (m',S)$.

$Rec(q) = (m',\o)$ implies that $u$ received $(m',\o)$ from a neighbor $q$. Let us suppose that $q$ is correct. Then, as $q$ sent $(m',\o)$, it implies that $q$ delivered $m'$. This is impossible, as $u$ is the first correct node to deliver $m'$. So $q$ is necessarily Byzantine.
Besides, according to the protocol, $Rec(p) = (m',S)$ implies that $card(S) \leq Z-3$.

Let us prove the following property $\mathcal{P}_i$ by induction,
for $0 \leq i \leq card(S)$: a correct node $p_i$, located at $i + 2$ hops or less from $q$, sent $(m',S_i)$ with $card(S_i) = card(S) - i$.

\begin{itemize}
\item First, let us show that $\mathcal{P}_0$ is true.
$Rec(p) = (m',S)$ implies that $p$ sent $(m',S)$.
Let us suppose that $p$ is Byzantine. Then, as $q$ is also Byzantine, $D \leq 2$, which is impossible as $D \geq Z \geq 3$. So $p$ is necessarily correct, and $\mathcal{P}_0$ is true if we take $p_0 = p$ and $S_0 = S$.
If $Z = 3$, ignore the following step.

\item Let us suppose that $\mathcal{P}_i$ is true, with $i < card(S)$.
As $card(S_i) = card(S) - i \geq 1$, $p_i$ necessarily received $(m',S_{i+1})$ from a node $p_{i+1}$ located at $i+3$ hops or less from $q$, with  $S_i = S_{i+1} \cup \{p_{i+1}\}$ and $p_{i+1} \notin S_{i+1}$.
Thus, we have $card(S_{i+1}) = card(S_i) - 1 = card(S) - i - 1$.
Let us suppose that $p_{i+1}$ is Byzantine.
Then, as $q$ is also Byzantine, $D \leq i+3 \leq card(S)+2 < Z$, which is impossible as $D \geq Z$. So $p_{i+1}$ is necessarily correct, and $\mathcal{P}_{i+1}$ is true.

\end{itemize}

Therefore, $\mathcal{P}_{card(S)}$ is true, and $p_{card(S)}$ sent $(m',\o)$,
as $card(S_{card(S)}) = card(S) - card(S) = 0$. According to the protocol, it implies that $p_{card(S)}$ delivered $m'$ before $u$, which contradicts our initial hypothesis. Hence, the result. 
\end{proof}

\begin{lemma}
\label{lline}
Let us suppose that $D \geq Z$. Let $(u_1,\dots,u_n)$ be a path of distinct correct nodes,
with $3 \leq n \leq Z$, such that $u_1$ and $u_n$ deliver $m_0$. Then, at least one of the nodes $u_2$ and $u_{n-1}$ delivers $m_0$.

\end{lemma}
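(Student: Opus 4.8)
The plan is to argue by contradiction: I assume that neither $u_2$ nor $u_{n-1}$ delivers $m_0$ and show that the delivery rule must nonetheless fire at $u_{n-1}$. Recall that a node delivers as soon as it simultaneously holds a \emph{direct} witness (a neighbor $q$ with $Rec(q)=(m_0,\o)$) and a \emph{compatible indirect} witness (a neighbor $p\neq q$ with $Rec(p)=(m_0,S)$ and $q\notin S$). Since $u_1$ delivers, it multicasts $(m_0,\o)$ and then stops, so $u_2$ eventually stores $Rec_{u_2}(u_1)=(m_0,\o)$; symmetrically $u_{n-1}$ eventually stores $Rec_{u_{n-1}}(u_n)=(m_0,\o)$ from $u_n$. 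Because these two deliverers stop after their single final multicast, I will treat these values as permanent; they are the intended direct witnesses. If $u_2$ delivered we would already be done, so the contradiction hypothesis in particular lets me assume that $u_2$ never stops and hence forwards every message it receives. The boundary case $n=3$ (where $u_2=u_{n-1}$ sits between the two deliverers, with both its witnesses equal to $(m_0,\o)$ and $u_1\notin\o$) is immediate and should be checked first; so I focus on $n\ge 4$.

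Next I would propagate a message wave rightward along the path, started by $u_1$'s delivery. As $u_2$ does not deliver, upon receiving $(m_0,\o)$ from $u_1$ it applies the forwarding rule (legal since $card(\o)=0\le Z-3$) and multicasts $(m_0,\{u_1\})$ to $u_3$. I then show by induction on $k\in\{2,\dots,n-2\}$ that $u_k$ transmits to $u_{k+1}$ a message $(m_0,S_k)$ with $S_k\subseteq\{u_1,\dots,u_{k-1}\}$ and $card(S_k)\le k-1$: each correct $u_k$ either forwards the incoming wave message, appending its predecessor $u_{k-1}$ and thus keeping $S_k$ inside the left portion of the path, or has already delivered and therefore sent $(m_0,\o)$, which merely restarts the wave from a node further to the right with a smaller set. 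In every case $u_n\notin S_k$. The decisive arithmetic is at the right end: the set incoming to $u_{n-2}$ has cardinality at most $n-4\le Z-3$, so $u_{n-2}$ is permitted to forward, and the set $S_{n-2}$ it sends to $u_{n-1}$ has cardinality at most $n-3\le Z-3$, so $u_{n-1}$ is in turn permitted to \emph{store} it in $Rec_{u_{n-1}}(u_{n-2})$. This is precisely where the hypothesis $n\le Z$ is used, and it is the reason the lemma is confined to paths of length at most $Z$.

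Finally I would combine the witnesses at $u_{n-1}$: it holds the permanent direct witness $Rec_{u_{n-1}}(u_n)=(m_0,\o)$ and it receives the indirect witness $(m_0,S_{n-2})$ with $u_n\notin S_{n-2}$; taking $q=u_n$, $p=u_{n-2}$, $S=S_{n-2}$ satisfies $q\neq p$ and $q\notin S$, so the delivery condition holds and $u_{n-1}$ delivers --- contradicting the hypothesis. I expect the real difficulty to be not the combinatorics but the asynchronous bookkeeping of the $Rec$ variables: since $Rec_{u_{n-1}}(u_{n-2})$ retains only the \emph{last} message from $u_{n-2}$, the good indirect message may be overwritten by the ``bad'' message $(m_0,\{u_n,u_{n-1}\})$ that the symmetric leftward wave makes $u_{n-2}$ send back (and, if $u_{n-2}$ has a Byzantine neighbor --- at most one under $D\ge Z$ --- by injected $m_0$-messages). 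The clean way to dispatch this is to exploit the permanence of the direct witness $(m_0,\o)$ coming from the stopped node $u_n$: once it is in place, \emph{any} activation at which $u_{n-1}$ processes a compatible indirect message yields delivery, so it suffices to argue that such a processing occurs after $u_n$ has stopped, which I would extract from $u_{n-1}$ being activated infinitely often while the rightward wave keeps it supplied with compatible messages. Pinning down this coincidence is the step I would treat most carefully.
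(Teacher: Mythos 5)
Your combinatorial core matches the paper's: the rightward ``wave'' induction showing $u_i$ multicasts $(m_0,S_i)$ with $S_i\subseteq\{u_1,\dots,u_{n-2}\}$ and $card(S_i)\le i-1$, the case split between a node that forwards (appending its predecessor) and a node that has already stopped (where you should explicitly invoke Lemma~\ref{lsafe} to conclude that the stopped node delivered $m_0$ and hence multicast $(m_0,\o)$), and the arithmetic $card(S_{n-2})\le n-3\le Z-3$ that uses $n\le Z$. The genuine gap is exactly the point you flag in your last paragraph and then fail to close. Your proposed escape --- that ``the rightward wave keeps it supplied with compatible messages'' --- does not hold: the wave is a one-shot event, since each $u_i$ forwards each received message exactly once, so $u_{n-2}$ sends a \emph{single} good message $(m_0,S_{n-2})$ to $u_{n-1}$. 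In an adversarial asynchronous schedule that message can be processed and then overwritten in $Rec_{u_{n-1}}(u_{n-2})$ (by the leftward wave's message containing $u_n$, or by a Byzantine neighbor's traffic) before $Rec_{u_{n-1}}(u_n)=(m_0,\o)$ is ever installed, and then the delivery guard never fires at $u_{n-1}$. So the argument as sketched does not go through.

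The paper closes this with an ordering device you are missing, and it is the reason the lemma's conclusion is a disjunction rather than a claim about $u_{n-1}$ specifically. Define the events $E_1$ (``$u_2$ receives $(m_0,\o)$ from $u_1$'') and $E_2$ (``$u_{n-1}$ receives $(m_0,\o)$ from $u_n$''), and suppose without loss of generality that $E_2$ occurs first. Then the permanent direct witness $Rec_{u_{n-1}}(u_n)=(m_0,\o)$ is already in place before the rightward wave even starts (the wave begins with $E_1$, which occurs after $E_2$), so when the single message $(m_0,S_{n-2})$ with $u_n\notin S_{n-2}$ reaches $u_{n-1}$, both witnesses are simultaneously present at the moment of receipt and the guard fires; no overwriting can intervene. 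If instead $E_1$ occurs first, the symmetric leftward argument shows $u_2$ delivers. Without this case analysis on which endpoint's $(m_0,\o)$ arrives first, the timing coincidence you correctly identify as the crux remains unproved.
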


\begin{proof}
As $u_1$ and $u_n$ deliver $m_0$, and therefore multicast $(m_0,\o)$, let $E_1$ and $E_2$ be the two following events: ($E_1$) $u_2$ receives $(m_0,\o)$ from  $u_1$ and ($E_2$) $u_{n-1}$ receives $(m_0,\o)$ from $u_n$. Let us suppose that $E_2$ is the first event to occur.
As $u_n$ delivers $m_0$, according to the protocol, $u_n$ stops.
Therefore, for the node $u_{n-1}$, $Rec(u_n) = (m_0,\o)$ until the end of the execution.

Let us prove the following property $\mathcal{P}_i$ by induction, for $1 \leq i \leq n-2$:
$u_i$ multicasts $(m_0,S_i)$, with $S_i \subseteq \{u_1,\dots,u_{n-2}\}$ and $card(S_i) \leq i-1$.
\begin{itemize}

\item As $u_1$ delivers $m_0$, $u_1$ multicasts $(m_0,\o)$. Therefore, $\mathcal{P}_1$ is true if we take $S_0 = \o$

\item Let us suppose that $\mathcal{P}_i$ is true, for $i < n-2$.
Then, $u_{i+1}$ receives $(m_0,S_i)$ from $u_i$, with $card(S_i) \leq i-1 < n-3 \leq Z-3$.
When it does, two possibilities:
\begin{itemize}
\item If $u_{i+1}$ has stopped, $u_{i+1}$ has necessarily delivered an information. As $D \geq Z$, according to Lemma~\ref{lsafe}, this information was $m_0$. Thus, according to the protocol, $u_{i+1}$ has already multicast $(m_0,\o)$, and $\mathcal{P}_{i+1}$ is true if we take $S_{i+1} = \o$.
\item Otherwise, as $card(S_i) \leq Z-3$, $u_{i+1}$ multicasts $(m_0,S_i \cup \{u_i\})$. Thus, $\mathcal{P}_{i+1}$ is true if we take $S_{i+1} = S_i \cup \{u_i\}$.
\end{itemize}
\end{itemize}

Therefore, $\mathcal{P}_{n-2}$ is true, and $u_{n-1}$ receives $(m_0,S_{n-2})$ from $u_{n-2}$,
with $S_{n-2} \subseteq \{u_1,\dots,u_{n-2}\}$ and $card(S_{n-2}) \leq n-3 \leq Z-3$.
Thus, for the node $u_{n-1}$, $Rec(u_{n-2}) = (m_0,S_{n-2})$, with $u_n \notin S_{n-2}$.
Thus, as we already have $Rec(u_n) = (m_0,\o)$, according to the protocol, $u_{n-1}$ delivers $m_0$.

If $E_1$ is the first event to occur, by a perfectly symmetric reasoning, we show that $u_2$ delivers $m_0$. Hence, the result.
\end{proof}

\begin{lemma}
\label{lxircle}
Let us suppose that $D \geq Z$.
Let $P$ be a correct polygon, and let $p_1$ and $p_2$ be two neighbor nodes of $P$ that deliver $m_0$. Then, all the nodes of $P$ deliver $m_0$.
\end{lemma}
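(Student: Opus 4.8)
The plan is to arrange the nodes of $P$ as a cycle $(p_1,p_2,\dots,p_k)$ of distinct correct nodes with $k \leq Z$, where consecutive nodes (and the pair $p_k,p_1$) are exactly the neighbors joined by the edges of $P$. Since $P$ is a correct polygon, all these nodes are correct, and we may relabel so that the two given delivering neighbors are $p_1$ and $p_2$. The whole argument will rest on Lemma~\ref{lline}, which lets us ``fill in'' a delivering node between two delivering endpoints along any sufficiently short path of distinct correct nodes. The goal is to show that delivery cannot stop anywhere along the cycle.

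I would argue by contradiction, assuming that at least one node of $P$ does not deliver $m_0$. Consider the maximal contiguous arc of consecutively delivering nodes that contains the edge $\{p_1,p_2\}$; write its two endpoints as $p_a$ and $p_b$ (so in cyclic order the arc reads $p_a,\dots,p_1,p_2,\dots,p_b$), and let $\ell \geq 2$ be its number of nodes. By maximality of this arc, the two neighbors lying immediately outside it, namely $p_{a-1}$ and $p_{b+1}$ with indices taken modulo $k$, do \emph{not} deliver $m_0$.

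Next I would look at the complementary path $(p_b,p_{b+1},\dots,p_{a-1},p_a)$ that runs the other way around the cycle, through exactly the non-delivering nodes. This is a path of distinct correct nodes whose length is $n = k - \ell + 2$: since $\ell \geq 2$ we get $n \leq k \leq Z$, and since the arc is not all of $P$ there is at least one node strictly between $p_b$ and $p_a$, giving $n \geq 3$. Its endpoints $p_a$ and $p_b$ both deliver $m_0$, so Lemma~\ref{lline} forces at least one of the second node $p_{b+1}$ or the second-to-last node $p_{a-1}$ to deliver $m_0$. This contradicts the maximality established above, so the assumption fails and every node of $P$ delivers $m_0$.

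The routine parts are the index bookkeeping modulo $k$ and the degenerate case in which a single node is missing (then $n = 3$ and Lemma~\ref{lline} directly yields that the lone middle node delivers). The one point that genuinely needs care is verifying the length bound $3 \leq n \leq Z$ so that Lemma~\ref{lline} applies: the upper bound relies on the delivering arc containing at least $p_1$ and $p_2$ (hence $\ell \geq 2$), and I should also confirm at the outset that the nodes of a polygon are distinct, which follows from $4$-connectivity, so that the path handed to Lemma~\ref{lline} indeed consists of distinct correct nodes.
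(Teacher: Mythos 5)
Your proof is correct and takes essentially the same approach as the paper: both apply Lemma~\ref{lline} to the complementary path around the polygon to extend a contiguous arc of delivering nodes, the paper phrasing this as an induction that grows the arc while you phrase it as a contradiction from a maximal arc. The length bounds you check ($3 \leq n \leq Z$) match those needed in the paper's inductive step.
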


\begin{proof}
Let $z \leq Z$ be the number of nodes of $P$.
Let us prove the following property $\mathcal{P}_i$ by induction, for $1 \leq i \leq z-1$:
there exists a path of $i+1$ nodes of $P$ that deliver $m_0$.

\begin{itemize}
\item $\mathcal{P}_1$ is true, as $(p_1,p_2)$ is a path of $2$ nodes that deliver $m_0$.

\item Let us suppose that $\mathcal{P}_i$ is true for $i < z-1$.
Let $(u_1,\dots,u_{i+1})$ be a path of $i+1$ nodes that deliver $m_0$.
Let $\{q_1,\dots,q_n\}$ be $n$ nodes such that $(u_1,\dots,u_{i+1},q_1,\dots,q_n,u_1)$ is a circular path on $P$.
Then, $(u_{i+1},q_1,\dots,q_n,u_1)$ is a path of correct nodes where 
$u_{i+1}$ and $u_1$ deliver $m_0$.
Therefore, according to Lemma~\ref{lline}, at least one of the nodes $q_1$ and $q_n$ deliver $m_0$.
Thus, at least one of the paths $(q_n,u_1,\dots,u_{i+1})$ and $(u_1,\dots,u_{i+1},q_1)$ contains $i+2$ nodes of $P$
that deliver $m_0$, and $\mathcal{P}_{i+1}$ is true.

\end{itemize}

Therefore, $\mathcal{P}_{z-1}$ is true, and the $z$ nodes of $P$ deliver $m_0$.
\end{proof}

\begin{theorem}
\label{threl}
If $D > Z$, we achieve reliable broadcast.
\end{theorem}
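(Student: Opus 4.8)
The plan is to reduce the global delivery requirement to a statement about correct polygons, and then propagate delivery across them by repeatedly invoking the cycle-filling Lemma~\ref{lxircle}. Since $D > Z$ entails $D \geq Z$, Lemma~\ref{lsafe} already supplies safety (any value a correct node delivers equals $m_0$), so only liveness remains: I must show that every correct node eventually delivers $m_0$. The skeleton is therefore: (i) produce a \emph{seed} correct polygon in which every node delivers $m_0$; (ii) show delivery spreads from any fully-delivered correct polygon to each adjacent correct polygon; (iii) close the argument with connectivity of the correct polygons and the fact that every correct node lies on one.

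First I would establish the seed. The source $s$ is correct, so by Lemma~\ref{lpcorr} it belongs to some correct polygon $P_0$. The source delivers $m_0$ (being the initiator of the broadcast), and any neighbor $a$ of $s$ lying on $P_0$ is correct and, upon receiving $m_0$ directly from the source, delivers it as well. Thus $s$ and $a$ are two neighboring nodes of the correct polygon $P_0$ that both deliver $m_0$, and Lemma~\ref{lxircle} (applicable since $D \geq Z$) forces \emph{every} node of $P_0$ to deliver $m_0$.

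Next I would set up the propagation step as a claim: if $P$ is a correct polygon all of whose nodes deliver $m_0$ and $P'$ is a correct polygon \emph{adjacent} to $P$, then every node of $P'$ delivers $m_0$. Indeed, adjacency means $P$ and $P'$ share an edge, i.e.\ two neighboring nodes $x$ and $y$; these belong to $P$ and hence deliver $m_0$, and they are two neighboring nodes of the correct polygon $P'$, so Lemma~\ref{lxircle} again yields full delivery on $P'$. This step is immediate once the definition of adjacency is unpacked.

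Finally I would combine these facts using connectivity. By Lemma~\ref{lppath} the set of correct polygons is connected, so for any correct polygon $Q$ there is a polygonal path of correct polygons from $P_0$ to $Q$; a routine induction on the length of this path, applying the propagation claim at each consecutive adjacency, shows that all nodes of $Q$ deliver $m_0$. Since by Lemma~\ref{lpcorr} every correct node belongs to at least one correct polygon, it follows that every correct node delivers $m_0$, which is exactly reliable broadcast. The delicate points I anticipate are entirely in the seed and the inductive bookkeeping: one must be sure the source itself counts as delivering $m_0$ so that the pair $(s,a)$ legitimately feeds Lemma~\ref{lxircle}, and one must verify that both the base case ($P_0$) and every inductive step present exactly the ``two neighboring delivering nodes'' hypothesis that Lemma~\ref{lxircle} demands. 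The propagation and the final coverage argument are otherwise mechanical consequences of the earlier lemmas.
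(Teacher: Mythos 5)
Your proposal is correct and follows essentially the same route as the paper's own proof: seed the correct polygon containing the source via the source and one of its neighbors, spread delivery along adjacent correct polygons using Lemma~\ref{lxircle}, and conclude with Lemmas~\ref{lppath} and~\ref{lpcorr}. No meaningful differences to report.
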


\begin{proof}
Let $s$ be the source and let $p$ be a correct node.
According to Lemma~\ref{lpcorr}, $s$ belongs to a correct polygon $P$ and $p$ belongs to a correct polygon $P'$.
According to Lemma~\ref{lppath}, there exists a correct polygonal path $(Q_1,\dots,Q_n)$ such that
$Q_1 = P$ and $Q_n = P'$.

Let us prove the following property $\mathcal{P}_i$ by induction, for
$1 \leq i \leq n$:
all the nodes of $Q_i$ deliver $m_0$.

\begin{itemize}
\item First, let us show that $\mathcal{P}_1$ is true. Let $q$ be a neighbor of $s$ on $Q_1$.
As $Q_1$ is correct, according to the protocol, $q$ delivers $m_0$.
Then, according to Lemma~\ref{lxircle}, $\mathcal{P}_1$ is true.

\item Let us suppose that $\mathcal{P}_i$ is true, for $i < n$.
Let $u_1$ and $u_2$ be the two nodes shared by $Q_i$ and $Q_{i+1}$.
As $\mathcal{P}_i$ is true, $u_1$ and $u_2$ deliver $m_0$.
Then, according to Lemma~\ref{lxircle}, $\mathcal{P}_{i+1}$ is true.
\end{itemize}

Thus, $\mathcal{P}_n$ is true, and $p$ delivers $m_0$. Hence, the result.

\end{proof}

\subsection{Bounds tightness}

Let us show that the bound on $D$ (Theorem~\ref{threl}) cannot be improved.

\label{secthtight}

\begin{theorem}
\label{thtight}
If $D \geq Z$, no algorithm can guarantee reliable broadcast on $4$-connected planar graphs.
\end{theorem}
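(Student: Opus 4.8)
The plan is to prove the impossibility by a classical \emph{indistinguishability} (``two worlds'') argument: I will exhibit one fixed $4$-connected planar graph $G$ whose largest polygon has $Z$ edges, together with a placement of Byzantine nodes that are pairwise at distance exactly $Z$ (so that $D = Z \geq Z$ is attained), and then build two executions that are indistinguishable to some correct node $v$ yet demand that $v$ deliver two different messages. Since any candidate algorithm is a deterministic (or at worst randomized, by the same argument conditioned on coins) function of the sequence of messages a node receives on its channels, if $v$'s local view is identical in both executions then $v$ takes the same action in both; the contradiction with reliability then shows that no algorithm can succeed.

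Concretely, I would take $G$ to be built from two polygons of size $Z$ glued so as to admit an automorphism $\sigma$ of order two, padding the construction with enough additional edges to guarantee $4$-connectivity without creating any polygon with more than $Z$ edges (for instance by realizing $G$ as a symmetric ``ribbon'' of $Z$-gons closed into a band). The map $\sigma$ will swap two distinguished nodes $s$ and $s' = \sigma(s)$ that play the role of potential sources, and behave consistently on the distinguished correct node $v$ whose view we track. The essential geometric fact to establish is that on the $Z$-gon through $v$ one may place a Byzantine node $b_1$ adjacent to $v$ and a second Byzantine node $b_2$ at distance exactly $Z$ from $b_1$, so that a false value injected by $b_2$ reaches $v$'s far neighbour along a path of at most $Z-2$ correct hops while $b_1$ feeds $v$ the matching $(m,\varnothing)$ directly. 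This is precisely the configuration that Lemma~\ref{byznei} rules out when $D > Z$, so the instance lives exactly on the boundary.

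With the symmetry in hand I would define Execution~$1$: the source is $s$ broadcasting $m_0$, and the Byzantine nodes behave exactly as their $\sigma$-images would behave as correct nodes in Execution~$2$; and symmetrically Execution~$2$: the source is $s'$ broadcasting $m_1 \neq m_0$, with the Byzantine nodes mimicking the correct behaviour of Execution~$1$. By construction the messages that $v$ receives on each of its channels are identical across the two executions, so $v$ delivers the same message in both. But reliability forces $v$ to deliver $m_0$ in Execution~$1$ and $m_1 \neq m_0$ in Execution~$2$, a contradiction. An equivalent phrasing avoids $\sigma$ and instead directly constructs, for a single source and a fixed Byzantine pair at distance $Z$, an execution in which $v$ delivers a value $m' \neq m_0$, violating safety; I would present whichever variant is cleaner once the graph is pinned down.

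I expect the main obstacle to be the graph construction itself, namely meeting all four structural requirements simultaneously: planarity, $4$-connectivity, maximal polygon size equal to $Z$, and the existence of two Byzantine nodes at pairwise distance exactly $Z$ arranged so that the attack of the preceding paragraph actually goes through. In particular I must verify that the ``filler'' edges added for $4$-connectivity neither shorten the distance between the two Byzantine nodes below $Z$ (which would alter $D$ or trivialize the instance) nor create a polygon with more than $Z$ edges, and that the Byzantine nodes can emit a message history genuinely consistent with the correct protocol run in the companion execution. Checking these distance and consistency constraints, rather than the indistinguishability principle itself, is where the real work lies.
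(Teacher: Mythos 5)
Your high-level technique---an indistinguishability (``two worlds'') argument on a tight instance with $D = Z$---is the same one the paper uses, but the proposal leaves out the one structural idea that makes that argument actually work against \emph{every} algorithm, and that idea is essentially the whole content of the theorem. The paper's construction is a $4$-connected planar graph in which the source sits inside a region separated from the rest of the network by a node-cut of exactly four nodes, two correct and two Byzantine, placed so that the graph admits a symmetry exchanging the correct pair with the Byzantine pair (and $D = Z = 4$). Because \emph{all} information from the source must traverse this cut, the two Byzantine cut nodes can simulate the entire interior region under an alternative input $m_1 \neq m_0$, and the symmetry guarantees that an outer node cannot decide which pair of cut nodes is honest; no algorithm, whatever its message format, can break this tie. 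Your proposal never identifies such a cut. Without one, the Byzantine nodes cannot control $v$'s view: correct paths from the source that avoid both Byzantine nodes exist (indeed $4$-connectivity guarantees many of them), genuine information leaks around the adversary, and the two executions are no longer indistinguishable. Gluing two $Z$-gons with an order-two automorphism does not by itself produce this separation property, and you explicitly defer the verification that any of the structural constraints can be met simultaneously.

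Two further problems. First, the paragraph about placing $b_1$ adjacent to $v$ and $b_2$ at distance $Z$ so that ``a false value injected by $b_2$ reaches $v$'s far neighbour along a path of at most $Z-2$ correct hops while $b_1$ feeds $v$ the matching $(m,\varnothing)$'' is an attack on the \emph{specific protocol} of Section~2 (its two-witness delivery rule), not on an arbitrary algorithm; it cannot contribute to a theorem quantified over all algorithms, and mixing it into the argument obscures what still needs to be proved. Second, the variant with two candidate sources $s$ and $s' = \sigma(s)$ is problematic in this model: the source is a designated node (its neighbours explicitly wait for a message \emph{from the source}), so an execution with source $s$ and one with source $s'$ are different problem instances that a node may distinguish a priori; the clean version keeps the source fixed and varies only the broadcast value, exactly as the paper does. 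As written, the proposal states the right proof strategy but defers the construction that constitutes the proof.
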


\begin{proof}
Let us suppose the opposite: there exists an algorithm guaranteeing reliable broadcast on $4$-connected planar graphs for $D \geq Z$.
Let us consider the network of Figure~\ref{figcrit}.

\begin{figure}[H]
\begin{center}
\includegraphics[width=10cm]{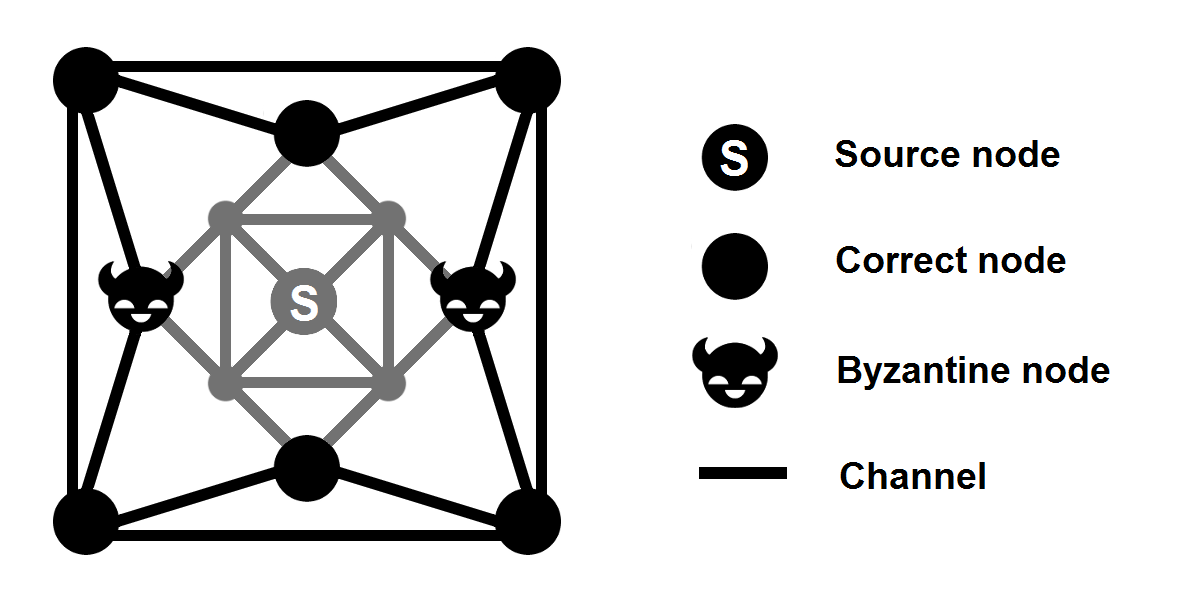}
\caption{Critical case for $D \geq Z$} 
\label{figcrit}
\end{center}
\end{figure}

In this network, $D = Z = 4$, thus $D \geq Z$ is satisfied. Here, we have $4$ nodes ($2$ correct, $2$ Byzantine) forming a node-cut that isolates the grey part of the network, which contains the source.

As there is a perfect symmetry between the $2$ correct nodes and the $2$ Byzantine nodes, the outer nodes can never determine $m_0$ with certitude, and reliable broadcast is impossible. This contradiction achieves the proof.
\end{proof}

Nevertheless, notice that it does not make the condition $D > Z$ necessary for all graphs: the necessary and sufficient condition to achieve byzantine resilient broadcast may be more complex than the distance between Byzantine failures. We leave this as an open question.

\section{Time complexity}

\label{sec_timprop}

In this section, we assume that the delay between two activations of the same process has an upper bound $T$. Then, we show that reliable broadcast is achieved in $O(d)$ time units, $d$ being the diameter of the network. This is the same time complexity as a simple broadcast, where any information received is retransmitted without verification.

\begin{lemma}
\label{lsizepath}
Let $p$ be a node located a $L \geq 1$ hops from the source.
Then, there exists a correct polygonal path of at most $Y^{3}ZL$ polygons connecting $p$ to the source.
\end{lemma}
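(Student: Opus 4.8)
The plan is to follow a shortest node path from the source to $p$ and to ``thicken'' it into a polygonal path, then repair the Byzantine parts using Lemma~\ref{byzcircle}. Fix a shortest path $s = v_0, v_1, \dots, v_L = p$ in $\mathcal{G}$; note that its interior nodes may well be Byzantine, so I cannot simply attach a correct polygon to each $v_i$. Instead I will first construct a \emph{raw} polygonal path, possibly containing Byzantine polygons, that tracks this node path, and only afterwards eliminate the Byzantine polygons.

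For the raw path, each edge $(v_i, v_{i+1})$ bounds at least one polygon; pick such a polygon $P_i$. Consecutive choices $P_{i-1}$ and $P_i$ both contain the node $v_i$, so they lie in the fan of polygons around $v_i$, whose consecutive members are adjacent (this is exactly the clockwise structure exhibited in the proof of Lemma~\ref{lcircupath}); since $v_i$ has at most $Y$ incident edges, at most $Y$ polygons surround it, so $P_{i-1}$ and $P_i$ can be joined by a sub-path of at most $Y$ polygons. Splicing these fans together -- and, at the two ends, joining $P_0$ and $P_{L-1}$ through the fans around $s$ and $p$ to a correct polygon $P_{\mathrm{start}} \ni s$ and a correct polygon $P_{\mathrm{end}} \ni p$ (which exist by Lemma~\ref{lpcorr}) -- yields a polygonal path from $P_{\mathrm{start}}$ to $P_{\mathrm{end}}$ consisting of $O(YL)$ polygons, with correct endpoints.

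Finally I remove the Byzantine polygons. Each maximal run of Byzantine polygons in the raw path is flanked by correct polygons (or by the correct endpoints), so Lemma~\ref{byzcircle} applies and replaces the run by a correct detour through the set $S$ of polygons bordering the fan $V$ of polygons around the single shared Byzantine node $b$. This detour is short: $V$ contains at most $Y$ polygons, one per consecutive edge-pair at $b$, so by Lemma~\ref{lcircupath} the circuit around $V$ carries at most $YZ$ nodes, each lying in at most $Y$ polygons, whence $|S| \le Y^2 Z$ and a simple detour uses at most $Y^2 Z$ polygons. Since the raw path has $O(YL)$ polygons it contains $O(YL)$ such Byzantine runs, and replacing each by a detour of length at most $Y^2 Z$ produces a correct polygonal path of length $O(Y^3 Z L)$ from $P_{\mathrm{start}}$ to $P_{\mathrm{end}}$. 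I expect the main obstacle to be purely quantitative: tightening these three estimates (raw length, number of runs, detour length) so that their product is exactly $Y^3 Z L$ rather than a constant multiple, and checking the boundary cases of Lemma~\ref{lcircupath} where the fan around a node is an arc rather than a full cycle. The qualitatively important point -- that the length grows only linearly in $L$ with a coefficient polynomial in $Y$ and $Z$ -- is what the subsequent $O(d)$ time bound relies on.
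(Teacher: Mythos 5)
Your proof follows essentially the same route as the paper's: build a raw polygonal path of $O(YL)$ polygons by exploiting, for each node of a path from $s$ to $p$, its fan of at most $Y$ polygons (the paper phrases this as connectivity of the union $U_1\cup\dots\cup U_{L+1}$ of these fans), then replace each maximal Byzantine run via Lemma~\ref{byzcircle} by a detour of at most $Y^{2}Z$ polygons, counted exactly as you count it through the proof of Lemma~\ref{lcircle}. The constant-factor bookkeeping you defer is settled in the paper by noting that the number of Byzantine runs is strictly less than $n/2$, so the total is at most $Y^{2}Z\cdot Y(L+1)/2\le Y^{3}ZL$.
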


\begin{proof}

Let $P$ be a correct polygon containing the source $s$, and let $P'$ be a correct polygon containing $p$. Such polygons exist, according to Lemma~\ref{lpcorr}.
Let $(u_1,\dots,u_{L+1})$ be a path connecting $s$ and $p$, and let $U_i$ be the set of polygons containing $u_i$. Each set $U_i$ is connected, otherwise $\{u_i\}$ would be a node-cut.
Therefore, $U = U_1 \cup \dots \cup U_{L+1}$ is connected. As each set $U_i$ contains at most $Y$ polygons, $U$ contains at most $Y(L+1)$ polygons.

Therefore, there exists a polygonal path $(P_1,\dots,P_n)$ of at most Y(L+1) polygons, with $P_1 = P$ and $P_n = P'$.
If this path is correct, the result is trivial. Otherwise,
let $(P_N,\dots,P_M)$ be a sequence of Byzantine nodes, as defined
in Lemma~\ref{lppath}.

Let us consider the proof of Lemma~\ref{lcircle}.
The circular path $(q_1,\dots,q_m)$ contains at most $YZ$ nodes,
and each set $S_i$ contains at most $Y$ polygons.
Thus, the set $S = S_1 \cup \dots \cup S_m$ contains at most $Y^{2}Z$ polygons.

Therefore, according to the proof of Lemma~\ref{lppath}, $(P_N,\dots,P_M)$ can be replaced by a sequence of at most $Y^{2}Z$ polygons. As the number of Byzantine sequences in $(P_1,\dots,P_n)$ is strictly inferior to $n/2$,
the correct path thus obtained contains at most
$Y^{2}Zn/2 \leq Y^{2}ZY(L+1)/2 \leq Y^{3}ZL$ polygons.
\end{proof}

\begin{theorem}
\label{thtime}
Reliable broadcast is achieved in $O(d)$ time units.
\end{theorem}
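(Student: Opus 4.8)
The plan is to combine the structural bound of Lemma~\ref{lsizepath} with a time analysis of how delivery propagates along a correct polygonal path. Recall from the proof of Theorem~\ref{threl} that a correct node $p$ comes to deliver $m_0$ because delivery spreads, one polygon at a time, along a correct polygonal path $(Q_1,\dots,Q_n)$ running from the source's polygon to $p$'s polygon: once the two nodes sharing the edge between $Q_i$ and $Q_{i+1}$ have delivered, Lemma~\ref{lxircle} forces every node of $Q_{i+1}$ to deliver. My strategy is therefore to (i) bound $n$ by $O(d)$, and (ii) bound by a constant the number of time units needed to ``conquer'' a single polygon once its two entry nodes have delivered; multiplying these two estimates yields the $O(d)$ bound.

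For step (i), I would apply Lemma~\ref{lsizepath} taking $L$ to be the distance from $p$ to the source. Since $L \leq d$ by definition of the diameter, and since $Y$ and $Z$ are constants of the topology, the correct polygonal path connecting $p$ to the source has length $n \leq Y^{3} Z d = O(d)$.

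For step (ii), I would unroll the two nested inductions that drive delivery inside a polygon. Conquering a correct polygon $P$ with $z \leq Z$ nodes requires $z-1$ applications of Lemma~\ref{lline} (the outer induction of Lemma~\ref{lxircle}), and each such application is itself a chain of at most $Z-2$ message transmissions followed by one delivery (the induction of Lemma~\ref{lline}). Hence, once the two entry nodes of $P$ have delivered, the delivery of all nodes of $P$ is causally triggered by a sequence of $O(Z^{2})$ message-reception events. Using the hypothesis that no process waits more than $T$ between two activations, each such causally dependent step costs $O(T)$ time units, so a single polygon is conquered in $O(Z^{2} T) = O(1)$ time.

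Chaining these estimates along the path, node $p$ delivers $m_0$ within $O(n \cdot Z^{2} T) = O(d \cdot Y^{3} Z^{3} T) = O(d)$ time units, and since this holds for every correct node the whole reliable broadcast completes in $O(d)$ time. I expect the main obstacle to be step (ii): one must check that the causal chains hidden in Lemma~\ref{lxircle} and Lemma~\ref{lline} are genuinely sequential and of length bounded by a function of $Z$ alone (not of the network size), and that delivery is \emph{monotone} --- a node that delivers stops, so its contribution to the entry condition of the next polygon persists and the per-polygon times simply add along the path rather than interfering. Establishing this monotonicity cleanly is what makes the telescoping of the per-polygon bounds into a single $O(d)$ estimate rigorous.
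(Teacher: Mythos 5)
Your proposal is correct and follows essentially the same route as the paper's proof: bound the correct polygonal path length by $Y^{3}Zd$ via Lemma~\ref{lsizepath}, bound the time to complete each polygon by $Z^{2}T$ by unrolling the inductions of Lemmas~\ref{lxircle} and~\ref{lline}, and multiply to obtain the $Y^{3}Z^{3}Td$ bound. Your explicit attention to the monotonicity of delivery (a node that delivers stops, so its $Rec$ contribution persists) is a point the paper leaves implicit, but the overall argument is the same.
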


\begin{proof}
Let us suppose that the source broadcasts $m_0$ at a date $t_0$.

Let $Q$ be a correct polygon, and let us suppose that two nodes of $Q$ have delivered $m_0$ at a date $t$.
Then, according to the proof of Lemma~\ref{lxircle}, a third node delivers $m_0$ before $t + ZT$, and so forth.
Thus, all nodes of $Q$ deliver $m_0$ before $t + Z^{2}T$.
Similarly, all nodes of $P$ deliver $m_0$ before $t_0 + Z^{2}T$.

According to Lemma~\ref{lsizepath}, 
for any node $p$ located at $L \geq 1$ hops from the source,
there exists a correct polygonal path
of $Y^{3}ZL$ polygons connecting this node to the source.
Thus, according to the proof of Theorem~\ref{threl},
$p$ delivers $m_0$ before $t_0 + Y^{3}Z^{3}LT$.

Therefore, as $L \leq d$, reliable broadcast is achieved in $Y^{3}Z^{3}Td$ time units. Thus, as $Y$, $Z$ and $T$ are bounded, reliable broadcast is achieved in a $O(d)$ time.

\end{proof}

\section{Required memory}

\label{sec_memo}
\label{exsafe}

In this section, we show that our solution is the first Byzantine resilient broadcast in sparse multi-hop networks where the used memory increases linearly with the size of informations, and not exponentially.

Indeed, the existing solutions \cite{K04c, BV05c, NT09j, CtrZ, Trig, Scalbyz}, the nodes are supposed to store as many information messages $m$ as necessary.
However, the Byzantine nodes can potentially broadcast all possible false informations $m' \neq m_0$. This strategy is referred to as \emph{exhaustion} in the literature \cite{veri1, veri2}.
Therefore, the correct nodes implicitly require $O(2^M)$ bits of memory to ensure reliable broadcast, $M$ being the maximal number of bits of an information $m$.

In our protocol, we made the following modification : instead of storing all the messages received, we only store the last message received from a neighbor $q$ in the variable $Rec(q)$. Thus, the nodes only require $O(M)$ bits of memory. More precisely, let us consider a finite network, and let $X$ be the maximal number of bits of a node identifier. As the largest tuple $(m,S)$ that a correct node can accept verifies $card(S) \leq Z$, each
variable $Rec$ requires at most $M + ZX$ bits. Thus, each correct node requires at most $Y(M + ZX)$ bits of memory.

Concerning the memory required in channels, the problem is the same for all solutions: we must assume that the delay between two activations of a same process belongs to an interval $\left[ T_1, T_2 \right]$, $T_1 > 0$ -- otherwise, the memory is impossible to bound. Indeed, let $N$ be the smallest integer such that $N > T_2/T_1$.
Then, as a node receives all the messages of its channels when activated, a channel connecting two correct nodes contains at most $N$ tuples $(m,S)$.
Besides, if a channel is connected to a Byzantine node, it can be overflowed without consequences: it is unimportant that a Byzantine node receives messages, and the messages received from a Byzantine node are already unpredictable. Thus, each channel requires at most $N(M + XZ)$ bits of memory.

Therefore, the local memory required in now $O(M)$ instead of $O(2^M)$.

\section{Conclusion}

We generalized the condition on the distance between Byzantine nodes to a class of planar graphs, and shown its tightness. Our solution has the same time complexity as a basic broadcast, and requires less memory than the previous solutions.

An open problem is to find more involved criteria for the placement of Byzantine failures, and to extend it to more general graphs. Also, even if we already have a linear time complexity, some optimizations could be made to further reduce the time to deliver genuine information.

\bibliographystyle{plain}
\bibliography{biblio}

\begin{thebibliography}{10}

\bibitem{AW98b}
H.~Attiya and J.~Welch.
\newblock {\em Distributed Computing: Fundamentals, Simulations, and Advanced
  Topics}.
\newblock McGraw-Hill Publishing Company, New York, May 1998.
\newblock 6.

\bibitem{BV05c}
Vartika Bhandari and Nitin~H. Vaidya.
\newblock On reliable broadcast in a radio network.
\newblock In Marcos~Kawazoe Aguilera and James Aspnes, editors, {\em PODC},
  pages 138--147. ACM, 2005.

\bibitem{CL99c}
Miguel Castro and Barbara Liskov.
\newblock Practical byzantine fault tolerance.
\newblock In {\em OSDI}, pages 173--186, 1999.

\bibitem{D82j}
D.~Dolev.
\newblock The {Byzantine} generals strike again.
\newblock {\em Journal of Algorithms}, 3(1):14--30, 1982.

\bibitem{DFS05c}
Vadim Drabkin, Roy Friedman, and Marc Segal.
\newblock Efficient byzantine broadcast in wireless ad-hoc networks.
\newblock In {\em DSN}, pages 160--169. IEEE Computer Society, 2005.

\bibitem{DMT10ca}
Swan Dubois, Toshimitsu Masuzawa, and S\'{e}bastien Tixeuil.
\newblock The impact of topology on byzantine containment in stabilization.
\newblock In {\em Proceedings of DISC 2010}, {L}ecture {N}otes in {C}omputer
  {S}cience, Boston, Massachusetts, USA, September 2010. {S}pringer {B}erlin /
  {H}eidelberg.

\bibitem{DMT10cd}
Swan Dubois, Toshimitsu Masuzawa, and S\'{e}bastien Tixeuil.
\newblock On byzantine containment properties of the min+1 protocol.
\newblock In {\em Proceedings of SSS 2010}, {L}ecture {N}otes in {C}omputer
  {S}cience, New York, NY, USA, September 2010. {S}pringer {B}erlin /
  {H}eidelberg.

\bibitem{DMT11j}
Swan Dubois, Toshimitsu Masuzawa, and S\'{e}bastien Tixeuil.
\newblock Bounding the impact of unbounded attacks in stabilization.
\newblock {\em IEEE Transactions on Parallel and Distributed Systems (TPDS)},
  2011.

\bibitem{DMT11cb}
Swan Dubois, Toshimitsu Masuzawa, and S{\'e}bastien Tixeuil.
\newblock Maximum metric spanning tree made byzantine tolerant.
\newblock In David Peleg, editor, {\em Proceedings of DISC 2011}, Lecture Notes
  in Computer Science (LNCS), Rome, Italy, September 2011. Springer Berlin /
  Heidelberg.

\bibitem{K04c}
Chiu-Yuen Koo.
\newblock Broadcast in radio networks tolerating byzantine adversarial
  behavior.
\newblock In Soma Chaudhuri and Shay Kutten, editors, {\em PODC}, pages
  275--282. ACM, 2004.

\bibitem{LSP82j}
Leslie Lamport, Robert~E. Shostak, and Marshall~C. Pease.
\newblock The byzantine generals problem.
\newblock {\em ACM Trans. Program. Lang. Syst.}, 4(3):382--401, 1982.

\bibitem{MMR03j}
D.~Malkhi, Y.~Mansour, and M.K. Reiter.
\newblock Diffusion without false rumors: on propagating updates in a
  {Byzantine} environment.
\newblock {\em Theoretical Computer Science}, 299(1--3):289--306, April 2003.

\bibitem{MRRS01c}
D.~Malkhi, M.~Reiter, O.~Rodeh, and Y.~Sella.
\newblock Efficient update diffusion in byzantine environments.
\newblock In {\em The 20th {IEEE} Symposium on Reliable Distributed Systems
  ({SRDS} '01)}, pages 90--98, Washington - Brussels - Tokyo, October 2001.
  IEEE.

\bibitem{MT06cb}
Toshimitsu Masuzawa and S\'{e}bastien Tixeuil.
\newblock Bounding the impact of unbounded attacks in stabilization.
\newblock In Ajoy~Kumar Datta and Maria Gradinariu, editors, {\em SSS}, volume
  4280 of {\em Lecture Notes in Computer Science}, pages 440--453. Springer,
  2006.

\bibitem{MT07j}
Toshimitsu Masuzawa and S\'{e}bastien Tixeuil.
\newblock Stabilizing link-coloration of arbitrary networks with unbounded
  byzantine faults.
\newblock {\em International Journal of Principles and Applications of
  Information Science and Technology (PAIST)}, 1(1):1--13, December 2007.

\bibitem{CtrZ}
Alexandre Maurer and S\'{e}bastien Tixeuil.
\newblock Limiting byzantine influence in multihop asynchronous networks.
\newblock {\em IEEE International Conference on Distributed Computing Systems
  (ICDCS 2012)}.

\bibitem{Trig}
Alexandre Maurer and S\'{e}bastien Tixeuil.
\newblock On byzantine broadcast in loosely connected networks.
\newblock {\em International Symposium on Distributed Computing (DISC 2012)}.

\bibitem{Scalbyz}
Alexandre Maurer and S\'{e}bastien Tixeuil.
\newblock A scalable byzantine grid.
\newblock {\em International Conference on Distributed Computing and Networking
  (ICDCN 2013)}.

\bibitem{MS03j}
Y.~Minsky and F.B. Schneider.
\newblock Tolerating malicious gossip.
\newblock {\em Distributed Computing}, 16(1):49--68, 2003.

\bibitem{NA02c}
Mikhail Nesterenko and Anish Arora.
\newblock Tolerance to unbounded byzantine faults.
\newblock In {\em 21st Symposium on Reliable Distributed Systems (SRDS 2002)},
  pages 22--29. IEEE Computer Society, 2002.

\bibitem{NT09j}
Mikhail Nesterenko and S\'{e}bastien Tixeuil.
\newblock Discovering network topology in the presence of byzantine nodes.
\newblock {\em IEEE Transactions on Parallel and Distributed Systems (TPDS)},
  20(12):1777--1789, December 2009.

\bibitem{PP05j}
Andrzej Pelc and David Peleg.
\newblock Broadcasting with locally bounded byzantine faults.
\newblock {\em Inf. Process. Lett.}, 93(3):109--115, 2005.

\bibitem{SOM05c}
Yusuke Sakurai, Fukuhito Ooshita, and Toshimitsu Masuzawa.
\newblock A self-stabilizing link-coloring protocol resilient to byzantine
  faults in tree networks.
\newblock In {\em Principles of Distributed Systems, 8th International
  Conference, OPODIS 2004}, volume 3544 of {\em Lecture Notes in Computer
  Science}, pages 283--298. Springer, 2005.

\bibitem{veri1}
P.~Sousa, N.~F. Neves, and P.~Ver\'{\i}ssimo.
\newblock How resilient are distributed f fault/intrusion-tolerant systems?
\newblock In {\em Proceedings of the 2005 International Conference on
  Dependable Systems and Networks (DSN'05)}, pages 98--107, June 2005.

\bibitem{veri2}
Paulo Sousa, Nuno~Ferreira Neves, Paulo Ver\'{\i}ssimo, and William~H. Sanders.
\newblock Proactive resilience revisited: The delicate balance between
  resisting intrusions and remaining available.
\newblock In {\em Proceedings of the 25th IEEE Symposium on Reliable
  Distributed Systems (SRDS)}, pages 71--80, October 2006.

\end{thebibliography}

\end{document}